\newcommand{\bolda}{{\boldsymbol{a}}}
\newcommand{\boldb}{{\boldsymbol{b}}}
\newcommand{\boldc}{{\boldsymbol{c}}}
\newcommand{\boldx}{{\boldsymbol{x}}}
\newtheorem{theorem}{Theorem}
\newtheorem{proposition}[theorem]{Proposition}
\theoremstyle{definition}
\newcommand{\cmark}{{\color[HTML]{03af7a} \ding{51}}}%
\newcommand{\xmark}{{\color[HTML]{ff4b00} \ding{55}}}%
  \providecommand\BibTeX{{%
    \normalfont B\kern-0.5em{\scshape i\kern-0.25em b}\kern-0.8em\TeX}}}
\begin{document}

\title{Towards Principled User-side Recommender Systems}

\author{Ryoma Sato}
\email{r.sato@ml.ist.i.kyoto-u.ac.jp}
\affiliation{%
  \institution{Kyoto University / RIKEN AIP}
  \city{Kyoto}
  \country{Japan}
}

\renewcommand{\shortauthors}{Sato}

\begin{abstract}
Traditionally, recommendation algorithms have been designed for service developers. However, recently, a new paradigm called user-side recommender systems has been proposed and they enable web service users to construct their own recommender systems without access to trade-secret data. This approach opens the door to user-defined fair systems even if the official recommender system of the service is not fair. While existing methods for user-side recommender systems have addressed the challenging problem of building recommender systems without using log data, they rely on heuristic approaches, and it is still unclear whether constructing user-side recommender systems is a well-defined problem from theoretical point of view. In this paper, we provide theoretical justification of user-side recommender systems. Specifically, we see that hidden item features can be recovered from the information available to the user, making the construction of user-side recommender system well-defined. However, this theoretically grounded approach is not efficient. To realize practical yet theoretically sound recommender systems, we propose three desirable properties of user-side recommender systems and propose an effective and efficient user-side recommender system, \textsc{Consul}, based on these foundations. We prove that \textsc{Consul} satisfies all three properties, whereas existing user-side recommender systems lack at least one of them. In the experiments, we empirically validate the theory of feature recovery via numerical experiments. We also show that our proposed method achieves an excellent trade-off between effectiveness and efficiency and demonstrate via case studies that the proposed method can retrieve information that the provider's official recommender system cannot.
\end{abstract}


\begin{CCSXML}
<ccs2012>
   <concept>
       <concept_id>10002951.10003317.10003347.10003350</concept_id>
       <concept_desc>Information systems~Recommender systems</concept_desc>
       <concept_significance>500</concept_significance>
       </concept>
 </ccs2012>
\end{CCSXML}

\ccsdesc[500]{Information systems~Recommender systems}

\keywords{recommender systems; fairness; user-side systems}

\maketitle

\section{Introduction}

Recommender systems have been adopted in various web services \cite{linden2003amazon, geyik2019fairness}, and in particular trustworthy systems have been demanded, including fair \cite{kamishima2012enhancement, biega2018equity, milano2020recommender}, transparent \cite{sinha2002role, balog2019transparent}, and steerable \cite{green2009generating, balog2019transparent} recommender systems. However, the adoption of such trustworthy systems is still limited \cite{sato2022private}, and many users still receive black-box and possibly unfair recommendations. Even if these users are dissatisfied with the system, they have limited power to change recommendations. For example, suppose a Democrat supporter may acknowledge that she is biased towards Democrats, but she wants to get news about Republicans as well. However, a news recommender system may show her only news related to Democrats to maximize the click through rate. In other words, filter bubbles exist in recommender systems \cite{pariser2011filter}. The only recourse available to the user is to wait until the news feed service implements a fair/diverse recommendation engine. \citet{green2009generating} also noted that ``If users are unsatisfied with the recommendations generated by a particular system, often their only way to change how recommendations are generated in the future is to provide thumbs-up or thumbs-down ratings to the system.'' Worse, there are many fairness/diversity criteria, and the service provider cannot address all the criteria. For example, even if the recommendation results are fair with respect to race, some other users may call for fairness with respect to gender. Or some users call for demographic parity when the service realizes the equalized odds.

User-side recommender systems \cite{sato2022private} provide a proactive solution to this problem. The users can build their own (i.e., private, personal, or user-side) recommender systems to ensure that recommendations are generated in a fair and transparent manner. As the system is build by the user, the system can be tailored to meet their own criteria with user-defined protected groups and user-defined criteria. Therefore, user-side recommender systems can be seen as ultimate personalization.

Although this concept is similar to that of steerable \cite{green2009generating} (or scrutable \cite{balog2019transparent}) recommender systems, the difference lies in the fact that steerable systems are implemented by the service provider, whereas user-side recommender systems are implemented by the users. If the recommender system in operation is not steerable, users need to wait for the service provider to implement a steerable system. By contrast, user-side recommender systems enable the users to make the recommender system steerable by their own efforts even if the service provider implemented an ordinary non-steerable system.

Although user-side recommender systems are desirable, the problem of building user-side recommender systems is challenging. In particular, end users do not have access to the trade-secret data stored in the database of the system unlike the developers employed by the service provider. This introduces information-theoretic limitation when users build user-side recommender systems. Most modern recommender systems use users' log data and/or item features to make recommendations. At first glance, it seems impossible to build an effective recommender system without such data. \citet{sato2022private} solved this problem by utilizing official recommender systems in the targeted web service, whose outputs are available yet are black-box and possibly unfair.
Specifically, \citet{sato2022private} proposed two algorithms for user-side recommender systems, \textsc{PrivateRank} and \textsc{PrivateWalk} and achieved empirically promising results. However, they rely on ad-hoc heuristics, and it is still unclear whether building recommender systems without log data is a well-defined problem. In this study, we address this question by using the metric recovery theory for unweighted $k$-nearest neighbor graphs. This result provides a theoretically grounded way to construct user-side recommender systems. However, we found that this approach suffers from large communication costs, and thus is not practical.

Then, we formulate the desirable properties for user-side recommender systems, including minimal communication cost, based on which we propose a method, \textsc{Consul}, that satisfies these properties. This method is in contrast to the existing methods that lack at least one axioms.

The contributions of this study are as follows:
\begin{itemize}
    \item The metric recovery theory validates that curious users can reconstruct the original information of items without accessing log data or item databases and solely from the recommendation results shown in the web page. On the one hand, this result shows that the raw recommendation suggested to the user contains sufficient information to construct user-side recommender systems. This result indicates that user-side recommender systems are feasible in principle. One the other hand, it raises new concerns regarding privacy (Section \ref{sec: recover}).
    \item We list three desirable properties of user-side recommender systems, i.e., consistency, soundness, and locality, and show that existing user-side recommender systems lack at least one of them (Section \ref{sec: principles}).
    \item We propose an algorithm for user-side recommender systems, \textsc{Consul}, which satisfies all three properties, i.e., consistency, soundness, and locality (Section \ref{sec: proposed}).
    \item We empirically validate that personal information can be reconstructed solely from raw recommendation results (Section \ref{sec: experiments-reverse}).
    \item We empirically demonstrate that our proposed method strikes excellent trade-off between effectiveness and communication efficiency (Section \ref{sec: experiments-performance}).
    \item We conduct case studies employing crowd-sourcing workers and show that user-side recommender systems can retrieve novel information that the provider's official recommender system cannot (Section \ref{sec: experiments-user}).
    \item We deploy \textsc{Consul} in the real-world Twitter environment and confirmed that \textsc{Consul} realizes an effective and efficient user-side recommender system. (Section \ref{sec: twitter}).
\end{itemize}

\begin{tcolorbox}[colframe=gray!20,colback=gray!20,sharp corners]
\textbf{Reproducibility}: Our code is available at \url{https://github.com/joisino/consul}.
\end{tcolorbox}

\section{Notations}

For every positive integer $n \in \mathbb{Z}_+$, $[n]$ denotes the set $\{ 1, 2, \dots n \}$.
A lowercase letter, such as $a$, $b$, and $c$, denotes a scalar, and a bold lower letter, such as $\bolda$, $\boldb$, and $\boldc$, denotes a vector. 
Let $\mathcal{I} = [n]$ denote the set of items, where $n$ is the number of items. Without loss of generality, we assume that the items are numbered with $1, \dots, n$. $K \in \mathbb{Z}_+$ denotes the length of a recommendation list. 

\section{Problem Setting} \label{sec: setting}

\begin{table}[tb]
    \centering
    \caption{Notations.}
    \vspace{-0.1in}
    \begin{tabular}{ll} \toprule
        Notations & Descriptions \\ \midrule
        $[n]$ & The set $\{ 1, 2, \dots, n \}$. \\
        $a, \bolda$ & A scalar and vector. \\
        $G = (V, E)$ & A graph. \\
        $\mathcal{I} = [n]$ & The set of items. \\
        $\mathcal{A}$ & The set of protected groups. \\
        $a_i \in \mathcal{A}$ & The protected attribute of item $i \in \mathcal{I}$. \\
        $\mathcal{H} \subset \mathcal{I}$ & The set of items that have been interacted with. \\
        $\mathcal{P}_{\text{prov}}$ & The provider's official recommender system. \\
        $K \in \mathbb{Z}_+$ & The length of a recommendation list. \\
        $\tau \in \mathbb{Z}_{\ge 0}$ & The minimal requirement of fairness. \\
        $d \in \mathbb{Z}_+$ & The number of dimensions of the embeddings. \\
        \bottomrule
    \end{tabular}
    \label{tab: notations}
    \vspace{-0.1in}
\end{table}

\textbf{Provider Recommender System.} In this study, we focus on item-to-item recommendations, following \citet{sato2022private}\footnote{Note that item-to-user recommender systems can also be built based on on item-to-item recommender systems by, e.g., gathering items recommended by an item-to-item recommender system for the items that the user has purchased.}. Specifically, when we visit the page associated with item $i$, $K$ items $\mathcal{P}_\text{prov}(i) \in \mathcal{I}^K$ are presented, i.e., for $k = 1, 2, \cdots, K$, $\mathcal{P}_\text{prov}(i)_k \in \mathcal{I}$ is the $k$-th relevant item to item $i$. For example, $\mathcal{P}_\text{prov}$ is realized in the ``Customers who liked this also liked'' form in e-commerce platforms. We call $\mathcal{P}_\text{prov}$ the \emph{service provider's official recommender system}. We assume that $\mathcal{P}_\text{prov}$ provides relevant items but is unfair and is a black-box system. The goal is to build a fair and white-box recommender system by leveraging the provider's recommender system.

\vspace{0.1in}
\noindent \textbf{Embedding Assumption.} We assume for the provider's recommender system that there exists a representation vector $\boldx_i \in \mathbb{R}^d$ of each item $i$, and that $\mathcal{P}_\text{prov}$ retrieves the top-$k$ nearest items from $\boldx_i$. We do not assume the origin of $\boldx$, and it can be a raw feature vector of an item, a column of user-item interaction matrix, hidden embedding estimated by matrix factorization, or hidden representation computed by neural networks. The embeddings can even be personalized. This kind of recommender system is one of the standard methods \cite{barkan2016item2vec, yao2018judging, sato2022private}. It should be noted that \emph{we assume that $\boldx_i \in \mathbb{R}^d$ cannot be observed} because such data are typically stored in confidential databases of the service provider, and the algorithm for embeddings is typically industrial secrets \cite{milano2020recommender}. For example, $\boldx_i$ may be a column of user-item interaction matrix, but a user cannot observe the interaction history of other users or how many times each pair of items is co-purchased. One of the main argument shown below is that even if $\{\boldx_i\}$ is not directly observable, we can ``reverse-engineer'' $\{\boldx_i\}$ solely from the outputs of the official recommender system $\mathcal{P}_\text{prov}$, which is observable by an end user. Once $\{\boldx_i\}$ is recovered, we can easily construct user-side recommender systems by ourselves using standard techniques \cite{geyik2019fairness, liu2019personalized, zehlike2017fair}.

\vspace{0.1in}
\noindent \textbf{Sensitive Attributes.} We assume that each item $i$ has a discrete sensitive attribute $a_i \in \mathcal{A}$, where $\mathcal{A}$ is the set of sensitive groups. For example, in a talent market service, each item represents a person, and $\mathcal{A}$ can be gender or race. In a news recommender system, each item represents a news article, and $\mathcal{A}$ can be $\{$Republican, Democrat$\}$. 
We want a user-side recommender system that offers items from each group in equal proportions. Note that our algorithm can be extended to any specified proportion (e.g., so that demographic parity holds), but we focus on the equal proportion for simplicity. We assume that sensitive attribute $a_i$ \emph{can be observed}, which is the common assumption in \cite{sato2022private}. Admittedly, this assumption does not necessarily hold in practice. However, when this assumption violates, one can estimate $a_i$ from auxiliary information or recovered embedding $\hat{\boldx}_i$, and the estimation of the attribute is an ordinary supervised learning task and can be solved by off-the-shelf methods, such as neural networks and random forests. As the attribute estimation process is not relevant to the core of user-side recommender system algorithms, this study focuses on the setting where the true $a_i$ can be observed.

\vspace{0.1in}
\noindent \textbf{Other Miscellaneous Assumptions.} We assume that the user knows the set $\mathcal{H} \subset \mathcal{I}$ of items that he/she has already interacted with. This is realized by, for instance, accessing the purchase history page. In addition, we assume that $\mathcal{P}_\text{prov}$ does not recommend items in $\mathcal{H}$ or duplicated items. These assumptions are optional and used solely for technical reasons. If $\mathcal{H}$ is not available, our algorithm works by setting $\mathcal{H}$ as the empty set.

\vspace{0.1in}
\noindent The problem setting can be summarized as follows:

\begin{tcolorbox}[colframe=gray!20,colback=gray!20,sharp corners]
\noindent \uline{\textbf{User-side Recommender System Problem.}}\\
\textbf{Given:} Oracle access to the official recommendations $\mathcal{P}_{\text{prov}}$. Sensitive attribute $a_i$ of each item $i \in \mathcal{I}$.\\
\textbf{Output:} A user-side recommender system $\mathcal{Q}\colon \mathcal{I} \to \mathcal{I}^K$ that is fair with respect to $\mathcal{A}$.
\end{tcolorbox}

\section{Interested Users Can Recover Item Embeddings} \label{sec: recover}

We first see that a user without access to the database can recover the item embeddings solely from the recommendation results $\mathcal{P}_\text{prov}$.

\vspace{0.1in}
\noindent \textbf{Recommendation Networks.} A recommendation network is a graph where nodes represent items and edges represent recommendation relations. Recommendation networks have been traditionally utilized to investigate the properties of recommender systems \cite{cano2006topology, celma2008new, seyerlehner2009limitation}. They were also used to construct user-side recommender systems \cite{sato2022private}. Recommendation network $G = (V, E)$ we use in this study is defined as follows:
\begin{itemize}
    \item Node set $V$ is the item set $\mathcal{I}$.
    \item Edge set $E$ is defined by the recommendation results of the provider's recommender system. There exists a directed edge from $i \in V$ to $j \in V$ if item $j$ is included in the recommendation list in item $i$, i.e., $\exists k \in [K] \text{ s.t. } \mathcal{P}_\text{prov}(i)_k = j$.
    \item We do not consider edge weights. 
\end{itemize}
It should be noted that $G$ can be constructed solely by accessing $\mathcal{P}_\text{prov}$. In other words, an end user can observe $G$. In practice, one can crawl the web pages to retrieve top-$k$ recommendation lists. We use unweighted recommendation networks because weights between two items, i.e., the similarity score of two items, are typically \emph{not} presented to users but only a list of recommended items is available in many web services. This setting makes it difficult to estimate the hidden relationship from the recommendation network. Although the original data that generate the recommendation results, e.g., how many times items $i$ and $j$ are co-purchased, cannot be observed, the recommendation results provide useful information on the similarity of items. A critical observation is that, from the embedding assumption we introduce in Section \ref{sec: setting}, graph $G$ can be seen as the $k$-nearest neighbor ($k$-NN) graph of hidden item embeddings $\{\boldx_i\}$.

\vspace{0.1in}
\noindent \textbf{Embedding Recovery.} We discuss that the original embeddings can be recovered from the unweighted recommendation network based on the metric recovery theory \cite{hashimoto2015metric,luxburg2013density,terada2014local}.

First, it is impossible to exactly recover the original embeddings, but there exist some degrees of freedom such as translation, rotation, reflections, and scaling because such similarity transformations do not change the $k$-NN graph. However, similarity transformations are out of our interest when we construct recommender systems, and fortunately, the original embeddings can be recovered up to similarity transformations.

\citet{hashimoto2015metric} considered metric recovery from the $k$-NN graph of a general point cloud $\{\boldx_i\} \subset \mathbb{R}^d$ assuming $\{\boldx_i\}$ is sampled from an unknown distribution $p(x)$ and showed that under mild assumptions of the embedding distribution, the stationary distribution of the simple random walk on the $k$-NN graph converges to $p(x)$ with appropriate scaling (Corollary 2.3 in \cite{hashimoto2015metric}) when $k = \omega(n^{\frac{2}{d+2}} \log^{\frac{d}{d+2}} n)$. \citet{alamgir2012shortest} showed that by setting the weight of the $k$-NN graph $G$ to the scaled density function, the shortest path distance on $G$ converges to the distance between the embeddings. Therefore, distance matrix $D \in \mathbb{R}^{n \times n}$ of $\{\boldx_1, \cdots, \boldx_n\}$ can be estimated from the $k$-NN graph (Theorem S4.5 in \cite{hashimoto2015metric}). From the standard argument on the multi-dimensional scaling \cite{sibson1979studies}, we can estimate the original embeddings $\{\boldx_1, \cdots, \boldx_n\}$ from the distance matrix up to similarity transformations with sufficient number of samples $n$. 

\citet{terada2014local} also considered embedding recovery from unweighted $k$-NN graphs\footnote{Note that the use of the these methods require knowledge of the dimension $d$ of the embeddings. However, when the dimensionality is not available, it can be estimated from the unweighted $k$-NN graph using the estimator proposed in \cite{kleindessner2015dimensionality}. For the sake of simplicity, we assume that the true number of dimensions $d$ is known.}. Theorem 3 in \cite{terada2014local} shows that their proposed method, local ordinal embeddings (LOE), recovers the original embeddings $\{\boldx_i\}$ from the $k$-NN graph up to similarity transformations. Although the original proof of Theorem 3 in \cite{terada2014local} relied on the result of \cite{luxburg2013density}, which is valid only for $1$-dimensional cases, the use of Corollary 2.3 in \cite{hashimoto2015metric} validates the LOE for general dimensions. The strength of LOE is that it can estimate the embeddings in a single step while the density estimation approach \cite{luxburg2013density, hashimoto2015metric} requires intermediate steps of estimating the density function and distance matrix. We use the LOE in our experiments.

In summary, the item embeddings $\{\boldx_i\}$ can be recovered solely from the recommendation network. Once embeddings $\{\boldx_i\}$ are recovered, the user has access to the same information as the service provider, after which it is straightforward to construct user-side recommender systems, e.g., by adopting a post-processing fair recommendation algorithm \cite{geyik2019fairness, liu2019personalized, zehlike2017fair}. We call this approach estimate-then-postprocessing (ETP).

These results demonstrate that the $k$-NN recommendation network, which is observable by an end user, contains sufficient information for constructing user-side recommender systems. Therefore, from the perspective of information limit, the user-side recommender system problem is feasible.

To the best of our knowledge, this work is the first to adopt the embedding recovery theory to user-side recommender systems.

\vspace{0.1in}
\noindent \textbf{Security of Confidential Data.} Although it was not our original intent, the discussion above reveals a new concern regarding data privacy, i.e., confidential information about items may be unintentionally leaked from the recommender systems. For example, in a talent search service, a recommender system may use personal information, such as educational level, salary, and age, to recommend similar talents. This indicates that we can recover such data solely from the recommendation results, which we numerically confirm in the experiments. Therefore, service providers need to develop rigorous protection techniques to prevent leaks of personal information from the recommender system. This is beyond the scope of this work, and we leave this interesting problem as future work.
 
\vspace{0.1in}
\noindent \textbf{Limitation of ETP.} Although ETP is conceptually sound and provides a theoretical foundation of user-side recommender systems, it is impractical because it incurs high communication costs. LOE requires observing the entire $k$-NN graph, and so do other methods, including the path-based approach \cite{luxburg2013density} and global random walk approach \cite{hashimoto2015metric}. Indeed, \citet{luxburg2013density} noted that ``It is impossible to estimate the density in an unweighted $k$-NN graph by local quantities alone.'' As the density estimation can be reduced to embedding recovery, it is also impossible to recover the embedding by local quantities of an unweighted $k$-NN graph. This means that ETP requires to download all item pages from the web service to construct user-side recommender systems, which incurs prohibitive communication costs when the number of items is large. From a practical point of view, it is not necessary to exactly recover the ground truth embeddings. It may be possible to derive useful user-side recommender systems by bypassing the reverse engineering. We will formulate the three desiderata and propose a method that provably satisfies the desiderata in the following sections.

\section{Design Principles} \label{sec: principles}

We formulate three design principles for user-side recommender systems in this section. We consider user-side recommender systems that take a parameter $\tau \in \mathbb{Z}_{\ge 0}$ to control the trade-off between fairness and performance. $\tau = 0$ indicates that the recommender system does not care about fairness, and an increase in $\tau$ should lead to a corresponding increase in fairness.

\vspace{0.1in}
\noindent \textbf{Consistency.} A user-side recommender system $\mathcal{Q}$ is consistent if nDCG of $\mathcal{Q}$ with $\tau = 0$ is guaranteed to be the same as that of the official recommender system. In other words, a consistent user-side recommender system does not degrade the performance if we do not impose the fairness constraint. \citet{sato2022private} showed that \textsc{PrivateRank} is consistent.

\vspace{0.1in}
\noindent \textbf{Soundness.} We say a user-side recommender system is sound if the minimum number of items from each sensitive group is guaranteed to be at least $\tau$ provided $0 \le \tau \le K/|\mathcal{A}|$ and there exist at least $\tau$ items for each group $a \in \mathcal{A}$\footnote{Note that if $\tau > K/|\mathcal{A}|$ or there exist less than $\tau$ items for some group $a$, it is impossible to achieve this condition.}. In other words, we can provably guarantee the fairness of sound user-side recommender system by adjusting $\tau$. \citet{sato2022private} showed that \textsc{PrivateRank} is sound.

\vspace{0.1in}
\noindent \textbf{Locality.} We say a user-side recommender system is local if it generates recommendations without loading the entire recommendation network. This property is crucial for communication costs. As we mentioned above, the ETP approach does not satisfy this property.

\vspace{0.1in}
\noindent \textbf{Inconsistency of \textsc{PrivateWalk}.} The following proposition shows that \textsc{PrivateWalk} is not consistent \footnote{This fact is also observed from the experimental results in \cite{sato2022private}.}

\begin{proposition}
\textsc{PrivateWalk} is not consistent.
\end{proposition}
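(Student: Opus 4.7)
The plan is to exhibit an explicit small recommendation network and query item on which \textsc{PrivateWalk} with $\tau = 0$ produces an output whose nDCG is strictly below that of $\mathcal{P}_{\text{prov}}$. Since consistency as defined in Section~\ref{sec: principles} asks for equality of nDCG at $\tau = 0$, a single counterexample suffices, so I aim for a construction rather than a quantitative bound.

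First I would recall the structural behavior of \textsc{PrivateWalk}: for a query item $i$ it simulates walks on the recommendation network $G$ starting at $i$ and appends visited items to the output list whenever they satisfy the current group-fairness budget. At $\tau = 0$ the fairness constraint is vacuous, yet the output is still produced by the walk dynamics on $G$ rather than by directly copying $\mathcal{P}_{\text{prov}}(i)$. This is the essential contrast with \textsc{PrivateRank}, which at $\tau = 0$ reduces to returning $\mathcal{P}_{\text{prov}}(i)$ verbatim and is consistent for that reason.

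Next I would build the counterexample. Take $K = 2$ and a query $i$ whose official list is $\mathcal{P}_{\text{prov}}(i) = (j_1, j_2)$, and arrange the out-neighbors of $j_1$ in $G$ so that $j_2$ is not among them; instead the top out-neighbor of $j_1$ is some $\ell \notin \mathcal{P}_{\text{prov}}(i)$. Then a walk of the form $i \to j_1 \to \ell$ has positive probability and yields the output list $(j_1, \ell)$ rather than $(j_1, j_2)$. Because the ideal DCG at query $i$ is attained exactly by the ordering $\mathcal{P}_{\text{prov}}(i)$, every ranked list that differs from it has nDCG strictly less than one, so \textsc{PrivateWalk}$(i)$ has expected nDCG strictly smaller than that of $\mathcal{P}_{\text{prov}}$. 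Conditioning on the trajectory that realizes the non-official output, if one prefers a deterministic statement, gives the same conclusion.

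The main obstacle is pinning down \textsc{PrivateWalk}'s precise behavior when $\tau = 0$ and ruling out a degenerate reading in which the algorithm silently short-circuits to enumerating the immediate neighbors of $i$ in the official order. This is a matter of reading the algorithm specification of \citet{sato2022private} carefully rather than a deep technical step: once one verifies that \textsc{PrivateWalk} genuinely proceeds by walks of length greater than one and does not bypass the walk phase when no fairness constraint is active, the counterexample above goes through and establishes the proposition.
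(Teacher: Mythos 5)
Your proposal is correct and takes essentially the same approach as the paper: exhibit a small instance where, at $\tau = 0$, a positive-probability walk trajectory makes \textsc{PrivateWalk} output a list differing from $\mathcal{P}_{\text{prov}}(i)$, then choose the ground-truth item to be the one \textsc{PrivateWalk} misses so that its nDCG is strictly lower. Two minor details need patching: \textsc{PrivateWalk} restarts from the source after each acceptance, so the deviating trajectory is ``accept $j_1$, restart, re-select $j_1$, get rejected, continue from $j_1$ to $\ell$, accept $\ell$'' (exactly the pattern the paper's concrete five-item example traces) rather than a bare two-step walk $i \to j_1 \to \ell$; and the ground truth must be taken to be the missed item $j_2$, since a list that merely differs from the official one does not have strictly lower nDCG under every relevance assignment (e.g., if the only relevant item is $j_1$, the two lists tie).
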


\begin{proof}
We construct a counterexample where \textsc{PrivateWalk} with $\tau = 0$ generates different recommendations from the provider. Let $\mathcal{I} = [5]$ and $K = 2$,
\begin{align*}
\mathcal{P}_\text{prov}(i)_1 &= ((i + 3) \text{ mod } 5) + 1, \\
\mathcal{P}_\text{prov}(i)_2 &= (i \text{ mod } 5) + 1.
\end{align*}
The recommendations of the provider in item $3$ is $(2, 4)$. Let \textsc{PrivateWalk} select the first item (item $2$) as a result of randomness. Then, \textsc{PrivateWalk} accepts item $2$ and restarts the random walk. Let \textsc{PrivateWalk} select the first item (item $2$) again. \textsc{PrivateWalk} rejects item $2$ and continues the random walk. Let \textsc{PrivateWalk} select the first item (item $1$). Then, \textsc{PrivateWalk} accepts item $1$ and terminates the process. The final recommendation result is $(2, 1)$, which is different from the recommendations of the provider. If item $4$ is the ground truth item, the nDCG of \textsc{PrivateWalk} is less than the official recommender system.
\end{proof}

\begin{table}[tb]
\small
    \caption{Properties of user-side recommender systems. Postprocessing (PP) applies postprocessing directly to the official recommender system, which is not sound when the list does not contain some sensitive groups \cite{sato2022private}.}
    \vspace{-0.1in}
    \centering
    \begin{tabular}{lccccc} \toprule
    & PP & \textsc{PrivateRank} & \textsc{PrivateWalk} & ETP & \textsc{CONSUL} \\ \midrule
    Consistent & \cmark & \cmark & \xmark & \cmark & \cmark \\
    Sound & \xmark & \cmark & \cmark & \cmark & \cmark \\
    Local & \cmark & \xmark & \cmark & \xmark & \cmark \\ \bottomrule
    \end{tabular}
    \label{tab: prop}
\end{table}

The properties of user-side recommender systems are summarized in Table \ref{tab: prop}. The existing methods lack at least one properties. We propose a method that satisfies all properties in the next section.

\section{Proposed Method} \label{sec: proposed}

\setlength{\textfloatsep}{5pt}
\begin{algorithm2e}[t]
\caption{\textsc{Consul}}
\label{algo: consul}
\DontPrintSemicolon 
\nl\KwData{Oracle access to $\mathcal{P}_{\text{prov}}$, Source item $i \in \mathcal{I}$, Protected attributes $a_i ~\forall i \in \mathcal{I}$, Minimum requirement $\tau$, Set $\mathcal{H}$ of items that user $i$ has already interacted with, Maximum length $L_{\text{max}}$ of search.}
\nl\KwResult{Recommended items $\mathcal{R} = \{j_k\}_{1 \le k \le K}$.}
    \nl Initialize $\mathcal{R} \leftarrow []$ (empty), $p \leftarrow i$ \;
    \nl $c[a] \leftarrow 0 \quad \forall a \in \mathcal{A}$ \tcp*{counter of sensitive groups}
    \nl $\mathcal{S} \leftarrow \text{Stack}([])$ \tcp*{empty stack}
    \nl \For{\textup{iter} $\gets 1$ \textbf{to} $L_{\text{max}}$}{
    \nl     \While{$p$ \textup{is already visited}}{

    \nl     \If{$|\mathcal{S}| = 0$}{
    \nl         \textbf{goto} line 21 \tcp*{no further items}
            }
    \nl     $p \leftarrow \mathcal{S} \text{.pop\_top}()$  \tcp*{next search node}
            }
    \nl     \For{$k \gets 1$ \textbf{to} $K$}{
    \nl         $j \leftarrow \mathcal{P}_{\text{prov}}(p)_{k}$ \;
    \nl         \If{$j \textup{ \textbf{not} \textbf{in} } \mathcal{R} \cup \mathcal{H}$ \textup{\textbf{and}} $\sum_{a \neq a_j} \max(0, \tau - c[a]) \le K - |\mathcal{R}| - 1$}{
    \nl             \tcc{$j$ can be safely added keeping fairness. Avoid items in $\mathcal{R} \cup \mathcal{H}$.}
    \nl             Push back $j$ to $\mathcal{R}$. \;
    \nl             $c[a_j] \leftarrow c[a_j] + 1$ \;
                }
    \nl         \If{$|\mathcal{R}| = K$}{
    \nl             \textbf{return} $\mathcal{R}$ \tcp*{list is full}
                }
            }
    \nl     \For{$k \gets K$ \textbf{to} $1$}{
    \nl         $\mathcal{S} \text{.push\_top}(\mathcal{P}_{\text{prov}}(p)_{k})$ \tcp*{insert candidates}
            }
        }
    \nl \While{$|\mathcal{R}| < K$}{
    \nl     $j \leftarrow$ \text{Uniform}($\mathcal{I}$) \tcp*{random item}
    \nl     \If{$j \textup{ \textbf{not} \textbf{in} } \mathcal{R} \cup \mathcal{H}$ \textup{\textbf{and}} $\sum_{a \neq a_j} \max(0, \tau - c[a]) \le K - |\mathcal{R}| - 1$}{
    \nl         Push back $j$ to $\mathcal{R}$. \;
    \nl         $c[a_j] \leftarrow c[a_j] + 1$ \;
            }
        }
    \nl \textbf{return} $\mathcal{R}$ \;
\end{algorithm2e}

In this section, we propose \textbf{CON}sistent, \textbf{S}o\textbf{U}nd, and \textbf{L}ocal user-side recommender system, \textsc{Consul}. \textsc{Consul} inherits the basic idea from \textsc{PrivateWalk}, i.e., close items on the recommendation network of the official recommender system should be similar, and \textsc{Consul} retrieves similar items using a walk-based algorithm. However, there are several critical differences between \textsc{Consul} and \textsc{PrivateWalk}. First, \textsc{Consul} runs a deterministic depth-first search instead of random walks. This reduces the variance of the algorithm by removing the stochasticity from the algorithm. Second, \textsc{Consul} uses all recommended items when it visits an item page, whereas \textsc{PrivateWalk} chooses only one of them. This feature improves efficiency by filling the recommendation list with short walk length. Third, \textsc{Consul} continues the search after it finds a new item, whereas \textsc{PrivateWalk} restarts a random walk. When items of the protected group are distant from the source node, \textsc{PrivateWalk} needs to travel long distances many times. By contrast, \textsc{Consul} saves such effort through continuation. Although each modification is small, the combined improvements make a considerable difference to the theoretical properties, as we show in Theorem \ref{thm: consul}, and empirical performances, as we will show in the experiments. We stress that the simplicity of \textsc{Consul} is one of the strengths of our proposed method.

\vspace{0.1in}
\noindent \textbf{Pseudo Code.} Algorithm \ref{algo: consul} shows the pseudo code. Lines 3--5 initialize the variables. In lines 11--18, items included in the recommendation list in item $p$ are added to the list as long as the insertion is safe. The condition $\sum_{a \neq a_j} \max(0, \tau - c[a]) \le K - |\mathcal{R}| - 1$ ensures soundness as we will show in the theorem below. In lines 19--20, the adjacent items are included in the search stack in the descending order. This process continues until $K$ items are found, $L_{\text{max}}$ nodes are visited, or no further items are found. In lines 21--25, the fallback process ensures that $K$ items are recommended. Note that this fallback process is usually skipped because $K$ items are found in the main loop.

\begin{theorem} \label{thm: consul}
\textsc{Consul} is consistent, sound, and local.
\end{theorem}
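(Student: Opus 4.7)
The plan is to verify the three properties in turn. Consistency and locality admit short arguments, while soundness hinges on a loop invariant, which I would formulate and check by induction.

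For \textbf{consistency}, I would set $\tau = 0$ and observe that the safety check in line 12 collapses to $0 \le K - |\mathcal{R}| - 1$, which holds whenever the list is not yet full. The first iteration starts from $p = i$, so lines 11--18 scan $\mathcal{P}_\text{prov}(i)_1, \ldots, \mathcal{P}_\text{prov}(i)_K$ in order; the duplicate test $j \notin \mathcal{R} \cup \mathcal{H}$ is automatically satisfied because the problem-setting assumption guarantees that $\mathcal{P}_\text{prov}$ returns no items in $\mathcal{H}$ and no repetitions. Hence the algorithm appends all $K$ items in the official order and returns at line 17, so the output equals $\mathcal{P}_\text{prov}(i)$ and in particular has the same nDCG.

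For \textbf{locality}, I would note that the outer loop executes at most $L_{\text{max}}$ times and queries $\mathcal{P}_\text{prov}$ only for the current node $p$, while the fallback samples from $\mathcal{I}$ without any further oracle access. The total number of item pages fetched is therefore at most $L_{\text{max}}$, independent of $|\mathcal{I}|$, so the algorithm never has to traverse the entire recommendation network.

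The substantive part is \textbf{soundness}, which I would derive from the invariant
\[
K - |\mathcal{R}| \;\ge\; \sum_{a \in \mathcal{A}} \max(0, \tau - c[a])
\]
maintained at every point where the safety check is evaluated. The base case is $K \ge |\mathcal{A}|\tau$, which follows from the assumption $\tau \le K/|\mathcal{A}|$. For the inductive step, consider adding an item $j$ of group $a_j$. If $c[a_j] < \tau$, then $\tau - c[a_j] \ge 1$, so the previous invariant already forces $\sum_{a \ne a_j} \max(0, \tau - c[a]) \le K - |\mathcal{R}| - 1$, i.e., the safety check passes, and both $|\mathcal{R}|$ and the $a_j$-summand grow by exactly one, preserving the invariant. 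If instead $c[a_j] \ge \tau$, the $a_j$-summand remains zero after the update, and the check $\sum_{a \neq a_j} \max(0, \tau - c[a]) \le K - |\mathcal{R}| - 1$ is exactly the statement of the post-update invariant. Upon termination with $|\mathcal{R}| = K$, the invariant yields $\sum_a \max(0, \tau - c[a]) = 0$, i.e., $c[a] \ge \tau$ for every group.

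The main obstacle I anticipate is confirming that the algorithm does reach $|\mathcal{R}| = K$, so that the conclusion of the invariant can be applied. The main loop may stop early when the search stack empties or $L_\text{max}$ is reached, in which case control falls into the fallback in lines 21--25. Under the stated assumption that at least $\tau$ recommendable items exist for every group (and $|\mathcal{I}| \ge K$), the invariant guarantees that the set of items satisfying both the membership test and the safety check is nonempty whenever $|\mathcal{R}| < K$, so the uniform sampling fills the list almost surely and the soundness conclusion applies to the returned $\mathcal{R}$.
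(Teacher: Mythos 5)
Your proposal is correct and follows essentially the same route as the paper's own proof: the same loop invariant $\sum_{a \in \mathcal{A}} \max(0, \tau - c[a]) \le K - |\mathcal{R}|$ established by induction with the same two-case split on $c[a_j] \ge \tau$ versus $c[a_j] < \tau$, the same collapse of the safety check for consistency, and the same $L_{\text{max}}$ bound for locality. If anything, you fill in two details the paper leaves implicit --- why the duplicate test passes under the no-repetition assumption on $\mathcal{P}_\text{prov}$, and why the fallback loop terminates with $|\mathcal{R}| = K$ almost surely --- which only strengthens the argument.
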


\begin{figure*}[tb]
\centering
\includegraphics[width=0.95\hsize]{./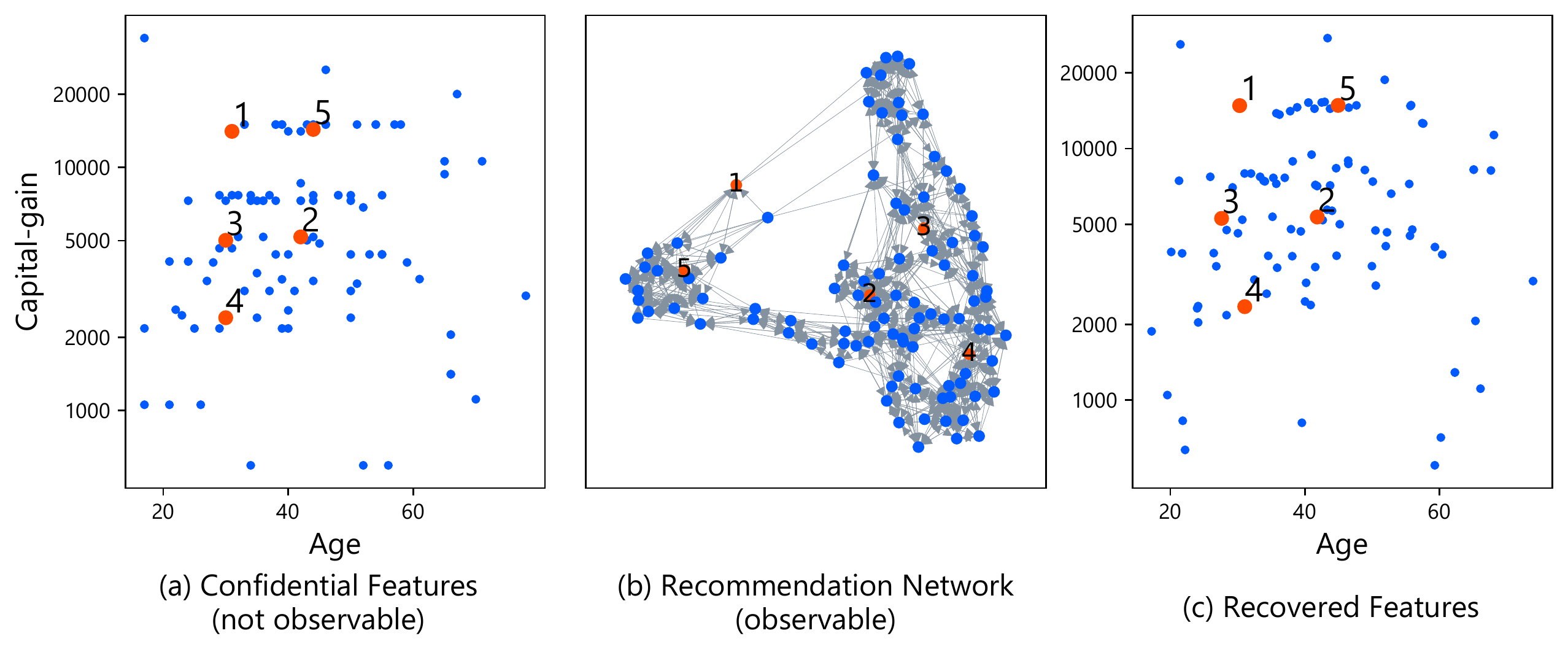}
\vspace{-0.2in}
\caption{\textbf{Feature Reverse Engineering.} \textbf{(Left)} The original features which are confidential. \textbf{(Middle)} The $k$-NN recommendation network revealed to the user. We visualize the graph with \texttt{networkx} package. \textbf{(Right)} The recovered feature solely from the recommendation results. Items $1$ to $5$ are colored red for visibility. All features are accurately recovered. These results show that the raw recommendation results contain sufficient information to build user-side recommender systems.}
\label{fig: reverse}
\vspace{-0.1in}
\end{figure*}

\begin{proof}
\noindent \textbf{Consistency.} If $\tau = 0$, $\sum_{a \neq a_j} \max(0, \tau - c[a]) = 0$ holds in line 13. Therefore, the condition in line 13 passes in all $K$ iterations in the initial node, and the condition in line 17 passes at the $K$-th iteration. The final output is $\mathcal{R} = \mathcal{P}_{\text{prov}}(i)$.

\vspace{0.1in}
\noindent \textbf{Soundness.} We prove by mathematical induction that \begin{align}\sum_{a \in \mathcal{A}} \max(0, \tau - c[a]) \le K - |\mathcal{R}| \label{eq: consul-proof-eq}\end{align} holds in every step in Algorithm \ref{algo: consul}. At the initial state, $|\mathcal{R}| = 0$, $c[a] = 0 ~\forall a \in \mathcal{A}$, and $\sum_{a \in \mathcal{A}} \max(0, \tau - c[a]) = \tau |\mathcal{A}|$. Thus, the inequality holds by the assumption $\tau |\mathcal{A}| \le K$. The only steps that alter the inequality are lines 15--16 and lines 24-25. Let $c, \mathcal{R}$ be the states before the execution of these steps, and $c', \mathcal{R}'$ be the states after the execution. We prove that $\sum_{a \in \mathcal{A}} \max(0, \tau - c'[a]) \le K - |\mathcal{R}'|$ holds assuming \begin{align}\sum_{a \in \mathcal{A}} \max(0, \tau - c[a]) \le K - |\mathcal{R}|, \label{eq: consul-proof-inductive}\end{align} i.e., the inductive hypothesis.  When these steps are executed, the condition \begin{align}\sum_{a \neq a_j} \max(0, \tau - c[a]) \le K - |\mathcal{R}| - 1 \label{eq: consul-proof-if}\end{align} holds by the conditions in lines 13 and 23. We consider two cases. (i) If $c[a_j] \ge \tau$ holds,
\begin{align*}
\sum_{a \in \mathcal{A}} \max(0, \tau - c'[a])
&\stackrel{\text{(a)}}{=} \sum_{a \neq a_j} \max(0, \tau - c'[a]) \\
&\stackrel{\text{(b)}}{=} \sum_{a \neq a_j} \max(0, \tau - c[a]) \\
&\stackrel{\text{(c)}}{\le} K - |\mathcal{R}| - 1 \\
&= K - |\mathcal{R}'|,
\end{align*}
where (a) follows $c[a_j] \ge \tau$, (b) follows $c'[a] = c[a] ~(\forall a \neq a_j)$, and (c) follows eq. \eqref{eq: consul-proof-if}. (ii) If $c[a_j] < \tau$ holds,
\begin{align*}
\sum_{a \in \mathcal{A}} \max(0, \tau - c'[a])
&\stackrel{\text{(a)}}{=} \sum_{a \in \mathcal{A}} \max(0, \tau - c[a]) - 1 \\
&\stackrel{\text{(b)}}{\le} K - |\mathcal{R}| - 1 \\
&= K - |\mathcal{R}'|,
\end{align*}
where (a) follows $c'[a_j] = c[a_j] + 1$ and $c[a_j] + 1 \le \tau$, and (b) follows eq. \eqref{eq: consul-proof-inductive}. In sum, eq. \eqref{eq: consul-proof-eq} holds by mathematical induction. When Algorithm \ref{algo: consul} terminates, $|\mathcal{R}| = K$. As the left hand side of eq. \eqref{eq: consul-proof-eq} is non-negative, each term should be zero. Thus, $c[a] = |\{i \in \mathcal{R} \mid a_i = a\}| \ge \tau$ holds for all $a \in \mathcal{A}$.

\vspace{0.1in}
\noindent \textbf{Locality.} \textsc{Consul} accesses the official recommender system in lines 12 and 20. As the query item $p$ changes at most $L_\text{max}$ times in line 10, \textsc{Consul} accesses at most $L_\text{max}$ items, which is a constant, among $n$ items.
\end{proof}

\vspace{0.1in}
\noindent \textbf{Time complexity.} The time complexity of \textsc{Consul} depends on the average length $L_{\text{ave}}$ of random walks, which is bounded by $L_{\text{max}}$. The number of loops in lines 11--18 is $O(K L_{\text{ave}})$. Although the condition in line 13 involves $|\mathcal{A}|$ terms, it can be evaluated in constant time by storing \[s \stackrel{\text{def}}{=} \sum_{a \in \mathcal{A}} \max(0, \tau - c[a])\] because \[\sum_{a \neq a_j} \max(0, \tau - c[a]) = s - \max(0, \tau - c[a_j]),\] the right hand side of which can be computed in constant time. The number of loops in lines 19-20 is also $O(K L_{\text{ave}})$. Therefore, the main loop in lines 6--20 runs in $O(K L_{\text{ave}})$ time if we assume evaluating $\mathcal{P}_{\text{prov}}$ runs in constant time. Assuming the proportion of each sensitive group is within a constant, the number of loops of the fallback process (lines 21--25) is $O(K |\mathcal{A}|)$ in expectation because the number of trials until the condition in line 23 passes is $O(|\mathcal{A}|)$. Therefore, the overall time complexity is $O(K (|\mathcal{A}| + L_{\text{ave}}))$ and is independent of the number $n$ of items. This is $K$ times faster than \textsc{PrivateWalk}. Besides, in practice, the communication cost is more important than the time complexity. The communication cost of \textsc{Consul} is $L_\text{ave}$, which is $K$ times faster than $O(K L_\text{ave})$ of \textsc{PrivateWalk}. We confirm the efficiency of \textsc{Consul} in experiments.

\section{Experiments} \label{sec: experiments}

We answer the following questions through the experiments.

\begin{itemize}
    \item (RQ1) Can users recover the confidential information solely from the recommendation network?
    \item (RQ2) How good a trade-off between performance and efficiency does \textsc{Consul} strike?
    \item (RQ3) Can user-side recommender systems retrieve novel information?
    \item (RQ4) Does \textsc{Consul} work in the real world?
\end{itemize}

\begin{table*}[t]
    \caption{Performance Comparison. Access denotes the average number of times each method accesses item pages, i.e., the number of queries to the official recommender systems. The less this value is, the more communication-efficient the method is. The best score is highlighted with bold. \textsc{Consul} is extremely more efficient than other methods while it achieves on par or slightly worse performances than Oracle and \textsc{PrivateRank}.}
    \vspace{-0.1in}
    \centering
\begin{tabular}{lcccccccc} \toprule
    & \multicolumn{2}{c}{Adult} & \multicolumn{3}{c}{MovieLens (oldness)} & \multicolumn{3}{c}{MovieLens (popularity)} \\
    \cmidrule(lr{1.0em}){2-3} \cmidrule(lr{1.0em}){4-6} \cmidrule(lr{1.0em}){7-9}
    & Accuracy $\uparrow$ & Access $\downarrow$ & nDCG $\uparrow$ & Recall $\uparrow$ & Access $\downarrow$ & nDCG $\uparrow$ & Recall $\uparrow$ & Access $\downarrow$ \\ \midrule
    Oracle & \textbf{0.788} & $\infty$ & \textbf{0.0321} & \textbf{0.057} & $\infty$ & \textbf{0.034} & \textbf{0.064} & $\infty$ \\
    \textsc{PrivateRank} & 0.781 & 39190 & 0.0314 & 0.055 & 1682 & \textbf{0.034} & 0.062 & 1682 \\
    \textsc{PrivateWalk} & 0.762 & 270.5 & 0.0273 & 0.049 & 154.2 & 0.029 & 0.054 & 44.0 \\
    \textsc{Consul} & 0.765 & \textbf{34.5} & \textbf{0.0321} & \textbf{0.057} & \textbf{19.6} & 0.033 & 0.060 & \textbf{4.6} \\ \bottomrule
\end{tabular}
\newline
\vspace*{0.1 in}
\newline
\begin{tabular}{lcccccc} \toprule
    & \multicolumn{3}{c}{Amazon} & \multicolumn{3}{c}{LastFM} \\
    \cmidrule(lr{1.0em}){2-4} \cmidrule(lr{1.0em}){5-7}
    & nDCG $\uparrow$ & Recall $\uparrow$ & Access $\downarrow$ & nDCG $\uparrow$ & Recall $\uparrow$ & Access $\downarrow$ \\ \midrule
    Oracle & \textbf{0.0326} & \textbf{0.057} & $\infty$ & \textbf{0.0652} & \textbf{0.111} & $\infty$ \\
    \textsc{PrivateRank} & 0.0325 & \textbf{0.057} & 1171 & 0.0641 & 0.107 & 1507 \\
    \textsc{PrivateWalk} & 0.0217 & 0.048 & 135.7 & 0.0424 & 0.080 & 74.1 \\
    \textsc{Consul} & 0.0310 & 0.052 & \textbf{7.3} & 0.0639 & 0.107 & \textbf{6.5} \\ \bottomrule
\end{tabular}
    \label{tab: performance}
    \vspace{-0.1in}
\end{table*}

\subsection{(RQ1) Feature Reverse Engineering} \label{sec: experiments-reverse}

\noindent \textbf{Setup.} We use Adult dataset \cite{dua2017uci}\footnote{\url{https://archive.ics.uci.edu/ml/datasets/adult}} in this experiment. This dataset contains demographic data such as age, sex, race, and income. Although this is originally a census dataset, we use this with talent search in mind. Specifically, we regard a record of a person as an item and construct a provider's recommender system that recommends similar people on each person's page. We use age and capital-gain, which can be confidential information, as features. We remove the highest and lowest values as outliers because they are clipped in the dataset. We take a logarithm for capital-gain because of its high dynamic range. We normalize the features and recommend $K$-nearest neighbor people with respect to the Euclid distance of the features. We recover the features solely from the recommendation results using LOF \cite{terada2014local}.

\vspace{0.1in}
\noindent \textbf{Results.} The left panel of Figure \ref{fig: reverse} visualizes the original data, which are not revealed to us. The middle panel of Figure \ref{fig: reverse} is the observed $k$-NN recommendation network, which is the only information we can use in this experiment. The right panel of Figure \ref{fig: reverse} is the recovered features. We apply the optimal similarity transformation that minimizes the L2 distance to the original embedding for visibility. We also highlight items $1, 2, 3, 4,$ and $5$ in red. These results show that the confidential information is accurately recovered. These results validate that the recommendation network, which is observable by an end user, contains sufficient information for building user-side recommender systems.

\vspace{0.1in}
\noindent \textbf{Discussion on the Similarity Transformation.} As we discussed in Section \ref{sec: recover}, the similarity transformation factor (e.g., rotation) cannot be determined solely from the $k$-NN graph. When we need to recover exact information, we may be able to determine the similarity transformation factor by registering a few items with known features to the item database, e.g., signing up at the talent search service. As the degree of freedom is at most $6$ dimensions in the case of two features, a few items suffice. As the exact recover is not the original aim of this study, we leave exploring more realistic approaches for recovering the exact information as future work.

\subsection{(RQ2) Performance} \label{sec: experiments-performance}

\noindent \textbf{Setup.} We use Adult dataset, MovieLens100k \cite{harper2016movielens}, Amazon Home and Kitchen \cite{he2016ups, mcauley2015image}, and LastFM \footnote{\url{https://grouplens.org/datasets/hetrec-2011/}} datasets following \citep{sato2022private}.

\noindent \textbf{Adult dataset.} In this dataset, an item represents a person, and the sensitive attribute is defined by sex. We use the nearest neighbor recommendations with demographic features, including age, education, and capital-gain, as the provider's official recommender system. The label of an item (person) represents whether the income exceeds \$50\,000 per year. The accuracy for item $i$ represents the ratio of the recommended items for item $i$ that have the same label as item $i$. The overall accuracy is the average of the accuracy of all items.

\noindent \textbf{MovieLens dataset.} In this dataset, an item represents a movie. We consider two ways of creating protected groups, (i) oldness: We regard movies released before 1990 as a protected group, and (ii) popularity: We regard movies with less than $50$ reviews as the protected group. We use Bayesian personalized ranking (BPR) \cite{rendle2009bpr} for the provider's recommender system, where the similarity of items is defined by the inner product of the latent vectors of the items, and the top-$K$ similar items are recommended. We use the default parameters of Implicit package\footnote{\url{https://github.com/benfred/implicit}} for BPR. We measure nDCG@$K$ and recall@$k$ as performance metrics following previous works \cite{krichene2020sampled, he2017neural, rendle2009bpr, sato2022private}. Note that we use the full datasets to compute nDCG and recall instead of employing negative samples to avoid biased evaluations \cite{krichene2020sampled}.

\noindent \textbf{LastFM and Amazon dataset.} In these datasets, an item represents a music and a product, respectively. We regard items that received less than $50$ interactions as a protected group. We extract $10$-cores for these datasets by iteratively discarding items and users with less that $10$ interactions. We use BPR for the provider's official recommender system, as on the MovieLens dataset. We use nDCG@$K$ and recall@$k$ as performance metrics.

In all datasets, we set $K = 10$ and $\tau = 5$, i.e., recommend $5$ protected items and $5$ other items. Note that all methods, \textsc{Consul} and the baselines, are guaranteed to generate completely balanced recommendations, i.e., they recommend $5$ protected items and $5$ other items, when we set $K = 10$ and $\tau = 5$. We checked that the results were completely balanced. So we do not report the fairness scores but only report performance metrics in this experiment. 

\begin{figure}[tb]
\centering
\includegraphics[width=0.8\hsize]{./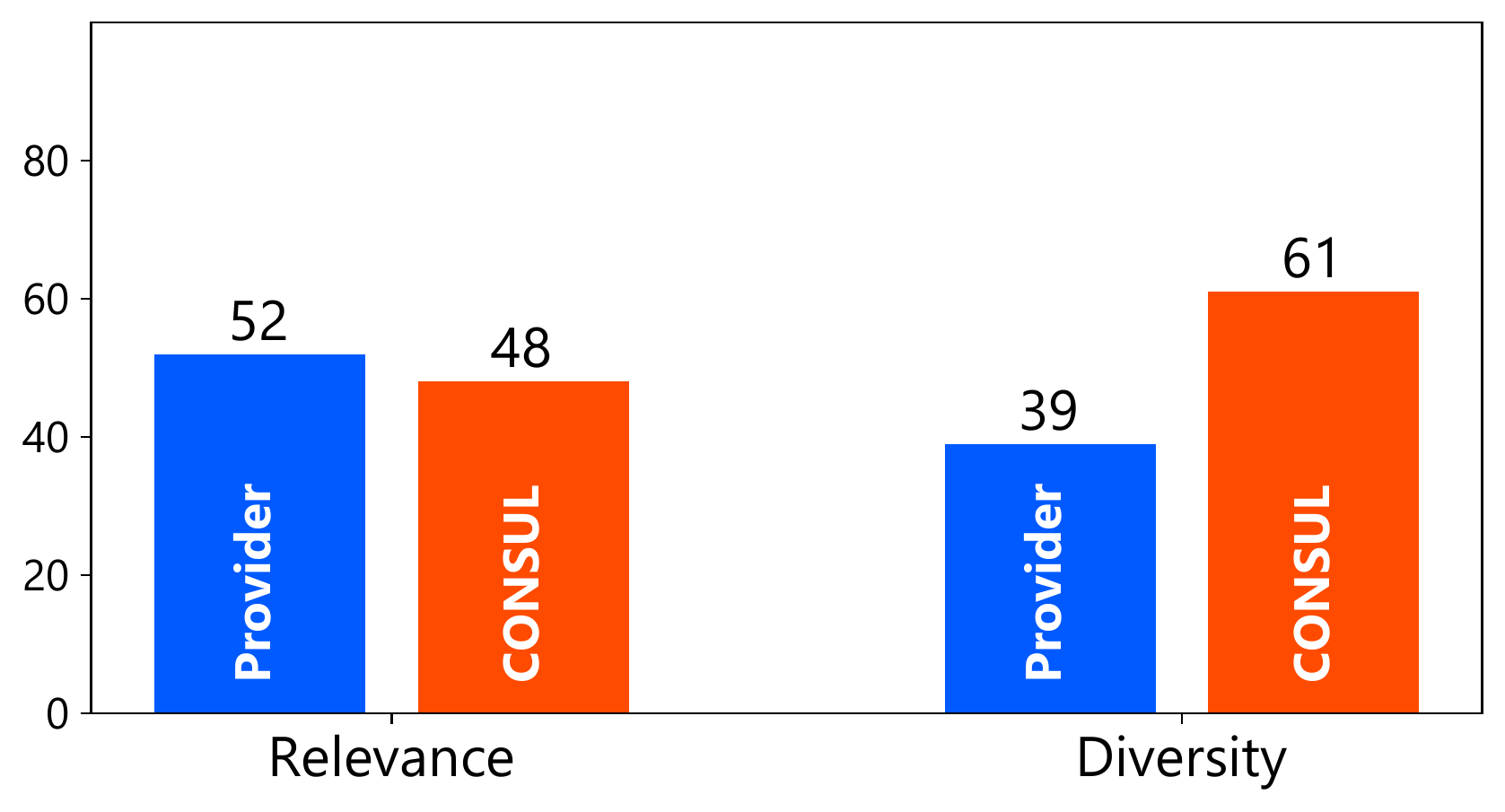}
\vspace{-0.17in}
\caption{\textbf{User Study}. Each value shows the number of times the method is chosen by the crowd workers in $100$ comparisons. Relevance: The crowd workers are asked which recommendation is more relevant to the source item. Diversity: The crowd workers are asked which recommendation is more diverse. \textsc{Consul} is on par or slightly worse than the provider in terms of relevance, while it clearly has more diversity.}
\label{fig: user}
\vspace{-0.03in}
\end{figure}

\vspace{0.1in}
\noindent \textbf{Methods.} We use \textsc{PrivateRank} and \textsc{PrivateWalk} \cite{sato2022private} as baseline methods. We use the default hyperparameters of \textsc{PrivateRank} and \textsc{PrivateWalk} reported in \cite{sato2022private}. Note that \citet{sato2022private} reported that \textsc{PrivateRank} and \textsc{PrivateWalk} were insensitive to the choice of these hyperparameters. In addition, we use the oracle method \cite{sato2022private} that uses the complete similarity scores used in the provider's recommender system and adopts the same fair post-processing as \textsc{PrivateRank}. The oracle method uses hidden information, which is not observable by end users, and can be seen as an ideal upper bound of performance. We tried to apply ETP but found that LOE \cite{terada2014local} did not finish within a week even on the smallest dataset, MovieLens 100k. As ETP is at most as communication efficient as \textsc{PrivateRank}, we did not pursue the use of ETP further in this experiment.

\vspace{0.1in}
\noindent \textbf{Results.} Table \ref{tab: performance} reports performances and efficiency of each method. First, \textsc{Consul} is the most efficient method and requires $10$ times fewer queries to the official recommender system compared with \textsc{PrivateWalk}. Note that \textsc{PrivateWalk} was initially proposed as an efficient method in \cite{sato2022private}. Nevertheless, \textsc{Consul} further improves communication cost by a large margin. Second, \textsc{Consul} performs on par or slightly worse than Oracle and \textsc{PrivateRank}, whereas \textsc{PrivateWalk} degrades performance in exchange for its efficiency. In sum, \textsc{Consul} strikes an excellent trade-off between performance and communication costs.

\vspace{0.1in}
\noindent \textbf{Discussion on the Real-time Inference.} It is noteworthy that \textsc{Consul} makes recommendations with only $6$ to $7$ accesses to item pages in Amazon and FastFM. This efficiency enables it to recommend items in real-time when the user visits an item page. This property of \textsc{Consul} is important because nowadays many recommendation engines are very frequently updated, if not in real time, and maintaining the recommendation graph is costly. \textsc{Consul} can make a prediction on the fly without managing the recommendation graph offline. By contrast, \textsc{PrivateWalk} requires close to 100 accesses, which prohibit real-time inference and thus requires to manage the recommendation graph offline. This is a clear advantage of \textsc{Consul} compared to the existing methods.

\begin{table}[t]
    \caption{Recommendations for ``Terminator 2: Judgment Day.'' The common items are shown in black, and the differences are shown in blue and red for visibility. In this example, \textsc{Consul} was chosen by the crowd workers in terms of \emph{both} relevance and diversity. \textsc{Consul} retrieves classic sci-fi movies in addition to contemporary sci-fi movies while Provider recommends only contemporary movies.}
    \centering
    \vspace{-0.1in}
    \scalebox{0.95}{
    \begin{tabular}{cc} \toprule
        \multicolumn{2}{c}{Terminator 2: Judgment Day (1991)} \\ \midrule
        Provider & \textsc{Consul} \\ \midrule
        Total Recall (1990) & Total Recall (1990) \\
        The Matrix (1990) & The Matrix (1990)  \\
        The Terminator (1990) & The Terminator (1990) \\
        {\color[HTML]{005AFF} Jurassic Park (1993)} & {\color[HTML]{FF4B00} Alien (1979)} \\
        {\color[HTML]{005AFF} Men in Black (1997)} & {\color[HTML]{FF4B00} Star Wars: Episode IV (1977)} \\
        {\color[HTML]{005AFF} The Fugitive (1993)} & {\color[HTML]{FF4B00} Star Trek: The Motion Picture (1979)} \\ \bottomrule
    \end{tabular}
    }
    \label{tab: example}
\end{table}

\subsection{(RQ3) User Study} \label{sec: experiments-user}

We conducted user studies of \textsc{Consul} on Amazon Mechanical Turk.

\vspace{0.1in}
\noindent \textbf{Setup.} We used the MovieLens 1M dataset in this experiment. We chose a movie dataset because crowd workers can easily interpret the results without much technical knowledge \cite{serbos2017fairness}. We choose $100$ movies with the most reviews, i.e., $100$ most popular items, as source items. We use BPR for the provider's recommender system as in the previous experiment. A movie is in a protected group if its release date is more that $10$ years earlier or later than the source items. We set $K = 6$ and $\tau = 3$ in this experiment. We conduct two user studies. We show the crowd workers the two recommendation lists side by side as in Table \ref{tab: example}, and in the first study, we ask the workers which list is more relevant to the source item, and in the second study, we ask them which list is more divergent in terms of release dates. We show the list without any colors or method names, and thus, the workers do not know which method offers which list. To mitigate the position bias, we show the provider's recommender system on the left panel for $50$ trials, and \textsc{Consul} on the left panel for $50$ trials. The two experiments are run in different batches so that one result does not affect the other result.

\vspace{0.1in}
\noindent \textbf{Results.} Figure \ref{fig: user} reports the number of times each method was chosen. This shows that \textsc{Consul} is on par or slightly worse than the provider's recommender system in terms of relevance. Meanwhile, the crowd workers found \textsc{Consul} to be more divergent. This shows that \textsc{Consul} indeed improves fairness (here in terms of release dates) and recommends divergent items. We stress that \textsc{Consul} does not necessarily compensate the relevance to improve fairness. Table \ref{tab: example} shows an actual example where the crowd workers found \textsc{Consul} to be \emph{both} relevant and divergent. \textsc{Consul} succeeded in recommending \emph{classic} sci-fi movies, whereas the provider's recommended only \emph{contemporary} movies, which have indeed high scores from the perspective of collaborative filtering. This result shows that \textsc{Consul} can retrieve information that the provider recommender system cannot.

We note that the official and user-side recommender systems are not alternatives to each other. Rather, a user can consult both official and user-side recommender systems. The diversity of \textsc{Consul} is preferable even if it degrades the relevance to some extent because official and user-side recommender systems can complement the defects of one another.

\subsection{(RQ4) Case Study in the Real World} \label{sec: twitter}

\begin{table}[t]
    \caption{Recommendations for ``Hugh Jackman'' on the real-world Twitter environment. The provider's official system recommends only male users and is not fair with respect to gender. Both \textsc{PrivateWalk} and \textsc{Consul} succeed completely balanced recommendations. All the recommended users by \textsc{Consul} are actors/actresses and related to the source user, Hugh Jackman. Besides, the number of accesses to the official system is much less than in \textsc{PrivateWalk}.}
    \centering
    \vspace{-0.15in}
    \scalebox{0.8}{
    \begin{tabular}{ccc} \toprule
        \multicolumn{3}{c}{Hugh Jackman} \\ \midrule
        Provider & \textsc{PrivateWalk} & \textsc{Consul} \\
        N/A & 23 accesses & \textbf{5 accesses} \\ \midrule
        Chris Hemsworth ({\color[HTML]{FF4B00} man}) & Ian McKellen ({\color[HTML]{FF4B00} man}) & Chris Hemsworth ({\color[HTML]{FF4B00} man}) \\
        Chris Pratt ({\color[HTML]{FF4B00} man}) & Zac Efron ({\color[HTML]{FF4B00} man}) & Chris Pratt ({\color[HTML]{FF4B00} man})  \\
        Ian McKellen ({\color[HTML]{FF4B00} man}) & Seth Green ({\color[HTML]{FF4B00} man}) & Ian McKellen ({\color[HTML]{FF4B00} man}) \\
        Zac Efron ({\color[HTML]{FF4B00} man}) & Dana Bash ({\color[HTML]{005AFF} woman}) & Brie Larson ({\color[HTML]{005AFF} woman}) \\
        Patrick Stewart ({\color[HTML]{FF4B00} man}) & Lena Dunham ({\color[HTML]{005AFF} woman}) & Danai Gurira ({\color[HTML]{005AFF} woman}) \\
        Seth Rogen ({\color[HTML]{FF4B00} man}) & Jena Malone ({\color[HTML]{005AFF} woman}) & Kat Dennings ({\color[HTML]{005AFF} woman}) \\ \bottomrule
    \end{tabular}
    }
    \label{tab: twitter}
\end{table}

Finally, we show that \textsc{Consul} is applicable to real-world services via a case study. We use the user recommender system on Twitter, which is in operation in the real-world. We stress that we are not employees of Twitter and do not have any access to the hidden data stored in Twitter. The results below show that we can build a recommender system for Twitter even though we are not employees but ordinary users. The sensitive attribute is defined by gender in this experiment. Table \ref{tab: twitter} shows the results for the account of ``Hugh Jackman.'' While the official recommender system is not fair with respect to gender, \textsc{Consul}'s recommendations are completely balanced, and \textsc{Consul} consumes only $5$ queries, which is $4$ times more efficient than \textsc{PrivateWalk}. Note that ETP and \textsc{PrivateRank} require to download all users in Twitter, and thus they are infeasible for this setting. These results highlight the utility of \textsc{Consul} in real-world environments.

\section{Related Work}

\noindent \textbf{User-side Information Retrieval.} User-side (or client-side) information retrieval \cite{sato2022private, sato2022retrieving, sato2022clear} aims to help users of web services construct their own information retrieval and recommender systems, whereas traditional methods are designed for the developers of the web services. This setting has many unique challenges including limited resources and limited access to databases. Our study focuses on user-side recommender systems, a special case of user-side information retrieval problems. User-side information retrieval systems are closely related to focused crawling  \cite{chakrabarti1999focused, mccallum2000automating, johnson2003evolving, baezayates2005crawling, guan2008guide}, which aims to retrieve information of specific topics \cite{chakrabarti1999focused, mccallum2000automating}, popular pages \cite{baezayates2005crawling}, structured data \cite{meusel2014focused}, and hidden pages \cite{barbosa2007adaptive}. The main difference between focused crawling and user-side information retrieval is that focused crawling consumes a great deal of time, typically several hours to several weeks, whereas user-side information retrieval aims to conduct real-time inference. Our proposed method is highly efficient as shown in the experiments and enables inference even when costly crawling is prohibitive.

\vspace{0.1in}
\noindent \textbf{Fairness in Recommender Systems.} As fairness has become a major concern in society \cite{united2014big, executive2016big}, many fairness-aware machine learning algorithms have been proposed \cite{hardt2016equality, kamishima2012fairness, zafar2017fairness}. In particular, fairness with respect to gender \cite{zehlike2017fair, singh2018fairness, xu2020algorithmic}, race \cite{zehlike2017fair, xu2020algorithmic}, financial status \cite{fu2020fairness}, and popularity \cite{mehrotra2018towards, xiao2019beyond} is of great concern. In light of this, many fairness-aware recommendation algorithms have been proposed \cite{kamishima2012enhancement, yao2017beyond, biega2018equity, milano2020recommender, sato2022enumerating}. Fair recommender systems can be categorized into three groups \cite{bruke2017mltisided}. C-fairness concerns fairness for users; it, for example, ensures that female and male users should be treated equally. P-fairness concerns fairness for items; it, for example, ensures that news related to Republicans and Democrats are treated equally. CP-fairness is the combination of the two. In this study, we focus on P-fairness following \cite{sato2022private}. A notable application of P-fair recommender systems is in the job-market \cite{geyik2019fairness, geyik2018building}, where items correspond to job-seekers, and the sensitivity attributes correspond to gender and race. In contrast to traditional P-fair recommender systems \cite{ekstrand2018exploring, beutel2019fairness, mehrotra2018towards, liu2019personalized}, which are designed for the developers of the services, our proposed method is special in that it is designed for the users of the services. Note that fairness is closely related to topic diversification \cite{ziegler2005improving} by regarding the topic as the sensitive attribute, and we consider the diversity of recommended items in this study as well.

\vspace{0.1in}
\noindent \textbf{Steerable Recommender Systems.} The reliability of recommender systems has attracted a lot of attention \cite{tintarev2007survey, balog2019transparent}, and steerable recommender systems that let the users modify the behavior of the system have been proposed \cite{green2009generating, balog2019transparent}. User-side recommender systems also allow the users modify the recommendation results. However, the crucial difference between steerable and user-side recommender systems are that steerable recommender systems must be implemented by a service provider, whereas user-side recommender systems can be built by arbitrary users even if the official system is an ordinary (non-steerable) one. Therefore, user-side recommender systems can expand the scope of steerable recommender systems by a considerable margin \cite{sato2022private}.

\vspace{0.1in}
\noindent \textbf{Metric Recovery.} We utilize the results on metric recovery when we derive the feasibility of user-side recommender systems. Recovering hidden representations has a long history, including the well-known multi-dimensional scaling \cite{kruskal1964multidimensional, agarwal2007generalized} and Laplacian Eigenmaps \cite{belkin2003laplacian}. In particular, our discussions rely on the metric recovery theory on unweighted $k$-NN graphs \cite{alamgir2012shortest, hashimoto2015metric, luxburg2013density, terada2014local}. To the best of our knowledge, this work is the first to connect the theory to security, fairness, or user-side aspects of recommender systems.

\section{Conclusion}

We first saw that interested users can estimate hidden features of items solely from recommendation results. This theory indicates that the raw recommendation results contain sufficient information to build user-side recommender systems and elucidates the rigorous feasibility of building user-side recommender systems without using log data. However, this approach is not practical due to its high communication costs. To design practical user-side recommender systems, we proposed three desirable properties of user-side recommender systems: consistency, soundness, and locality. We found that existing user-side recommender systems lack at least one property. We, therefore, proposed \textsc{Consul}, a user-side recommender system that satisfies the three desired properties. We then demonstrated empirically that confidential features can indeed be recovered solely from recommendation results. We also confirmed that our proposed method is much more communication efficient than the existing methods while retaining a high degree of accuracy. Finally, we conducted user studies with crowd workers and confirmed that our proposed method provided diverse items compared with provider's official recommender systems.

\begin{acks}
This work was supported by JSPS KAKENHI GrantNumber 21J22490. 
\end{acks}


\bibliographystyle{plainnat}
\bibliography{sample-base}

\end{document}